

\documentclass[journal abbreviation]{copernicus}

\usepackage{graphicx}
\usepackage{dcolumn}
\usepackage{bm}
\usepackage{amsmath}
\usepackage{graphicx}
\usepackage{caption}
\captionsetup{font=footnotesize,justification=raggedright,singlelinecheck=false}
\usepackage{amsmath}
\usepackage{enumitem}
\usepackage{amsthm}
\usepackage{xcolor}
\usepackage{mathptmx}
\usepackage[page,toc,titletoc,title]{appendix}
\usepackage{comment}
\usepackage[overload]{empheq}
\usepackage{cases}
\usepackage{lipsum}
\usepackage{amsfonts}
\usepackage{hyperref}
\usepackage{float}
\usepackage{natbib}
\usepackage{placeins}
\usepackage{atbegshi}
\newcommand{\dd}{\mathrm{d}}

\newcommand{\SSS}{\mathrm{S}}

\newcommand{\RRR}{\mathrm{R}}

\bibliographystyle{apalike}

\newtheorem{proposition}{Proposition}
\usepackage{hyperref}

\definecolor{darkorange}{rgb}{1.0, 0.55, 0.0}

\definecolor{rred}{rgb}{0.7,0,0.1}
\definecolor{greenrb}{rgb}{0.2,0.6,0.2}

\def\bea{\begin{equation} \begin{aligned}}
\def\eea{\end{aligned} \end{equation}}
\def\be{\begin{equation}}
\def\ee{\end{equation}}

\begin{document}

\nolinenumbers
\title{Dynamical regimes of CCN activation in adiabatic air parcels}

\Author[1][manuel.santos-gutierrez@weizmann.ac.il]{Manuel}{Santos Guti\'errez} 
\Author[1]{Micka\"el D.}{Chekroun}
\Author[1]{Ilan}{Koren}

\affil[1]{Department of Earth and Planetary Sciences, Weizmann Institute of Science, Rehovot 76100, Israel}

\runningauthor{MSG, MDC, and IK}
\runningtitle{Unveiling the Activation Puzzle in Elusive Twilight Clouds}


\received{}
\pubdiscuss{} 
\revised{}
\accepted{}
\published{}


\firstpage{1}

\maketitle

\begin{abstract}
Ubiquitous, yet elusive to a complete understanding: Tiny, warm clouds with faint visual signatures play a critical role in Earth's energy balance.  These "twilight clouds", as they are sometimes called, form under weak updraft conditions. Their constituent particles exist in a precarious state, teetering between hazy wisps and activated droplets. This delicate thermodynamic balance creates a limited reservoir of supersaturation, which activated droplets readily consume. Our research presents a novel approach, solving coupled equations for particle growth (Köhler's equation) and supersaturation change. This reveals previously unconsidered activation states for these clouds. Additionally, the analysis predicts conditions where particles can exhibit self-sustained oscillations between haze and activated droplet states.
\end{abstract}



\section{Introduction}

The coupling of cloud microphysics with climate dynamics is one of the most challenging problems in the atmospheric sciences. Far from being solved, the problem of cloud parameterization is responsible for the largest sources of uncertainty in global climate projections \citep{zelinka2020}. The large spatial scale separation between general circulation and convective motions, the inherent complexity of cloud motion and the current computational power, demand a more accurate description of the nonlinear processes in the onset and development of clouds.

From all the cloud types, the warm and smallest ones are particularly difficult to parameterize in climate models despite being vastly abounding in Earth's atmosphere \citep{koren2007twilight,varnai2009modis,eytan2020}. Specifically, "twilight clouds" are the most challenging to capture due to their poor optical signature. Regarding their microphysical structure, these small clouds are likely to reside near the transition from haze to activated cloud droplets, a domain of sizes difficult to resolve in current large-scale numerical models. Some of the latter clouds were shown to form in buoyant humidity pockets--- often forming subLCL clouds \citep{hirsch2015properties,hirsch2017,altaratz2021environmental}---, that oscillate between these two thermodynamic states due to the weak updraft in which they are embedded. The transition from haze to activated cloud droplets is described by K\"ohler's theory. This study aims to understand the dynamics of droplet's transitions relevant to twilight clouds.

While Köhler's theory \citep{kohler_1936} explains how individual cloud condensation nuclei (CCN) activate into droplets based on critical humidity and radius, it treats single particles. For a real cloud with many CCNs in the same space, we need to consider how growing droplets affect the surrounding supersaturation \citep{Korolev2003}, \cite[Eq.~42]{pinsky2013}. Cloud models typically couple droplet growth with supersaturation change after activation, e.g. \citep{khain2008,khain_pinsky_2018}. Köhler's theory assumes slow supersaturation depletion compared to droplet growth, which works well for vigorous clouds. In contrast, this study focuses on the fate of collections of monodisperse particles taken together in situations with weak updrafts, where supersaturation sources are limited and droplet sizes remain near critical values.

\section{The supersaturation-radius-K\"ohler (SRK) equation}\label{sec:theory}

We consider a vertically moving adiabatic air parcel seeded with a monodisperse family of aerosols of the same size and chemistry. The small size of CCN provokes that their curvatures increase the equilibrium vapor pressure, so that humidity values above saturation are needed to activate a water droplet. The curvature effect is also counteracted by the CCN's chemical ability to retain water and their combination is the content of K\"ohler theory \citep{kohler_1936,prupp}. 

If $\SSS$ denotes liquid-water supersaturation and $r$ the radius of a particle, the rate of change of $r$, with respect to time, is assumed to take the form:
\begin{equation}\label{eq:kohler}
	\dot{r} = \frac{D}{r}\left( \SSS - \frac{A}{r} + \frac{B}{r^3}  \right),
\end{equation}
where $A/r$, is Kelvin's curvature effect and, $B/r^3$ is Raoult's term and $D$ is the temperature-dependent diffusional parameter; see Appendix~\ref{app:notation}. The right-hand-side of Eq.~\eqref{eq:kohler} is a truncation of the K\"ohler curve which originally takes an exponential form, see e.g. \cite{wallace_hobs,arabas2017,prabhakaran2020}. For large values of $r$, saturation, $\SSS > 0$ is enough to induce condensational growth, since $\dot{r}>0$.

Analyzing the function $f(r) = Ar^{-1}-Br^{-3}$ reveals a single global maximum at $r_c = \sqrt{3B/A}$ and associated supersaturation $\SSS_c = f(r_c) = Ar_c^{-1}-Br_c^{-3}=2A/3r_c$. Such values are called the critical radius and supersaturation, respectively. Indeed, if $\SSS > \SSS_c$, it follows that equation \eqref{eq:kohler} does not have an equilibrium and therefore, the particle radius $r(t)$ grows indefinitely, entering the activated state. If $0<\SSS<\SSS_c$, two equilibria are allowed, of radii $r_h<r_c$ and $r_u>r_c$. The equilibrium $r_h$ is stable and represents haze, i.e., humidified particles which do not grow with the available supersaturation. The equilibrium $r_u$ is unstable. It reflects a threshold size beyond which a particle activates and becomes hygroscopic. An example of K\"ohler curve, $f(r)$, for salt is shown in Fig.~\ref{fig:illus_kohler}. As supersaturation increases, the haze and unstable equilibrium collide, provoking  model \eqref{eq:kohler} to undergo a saddle-node bifurcation leading to the activation of a droplet \citep{arabas2017}.
\begin{figure}[h]
	\centering
	\includegraphics[width=.8\linewidth, height=.3\textwidth]{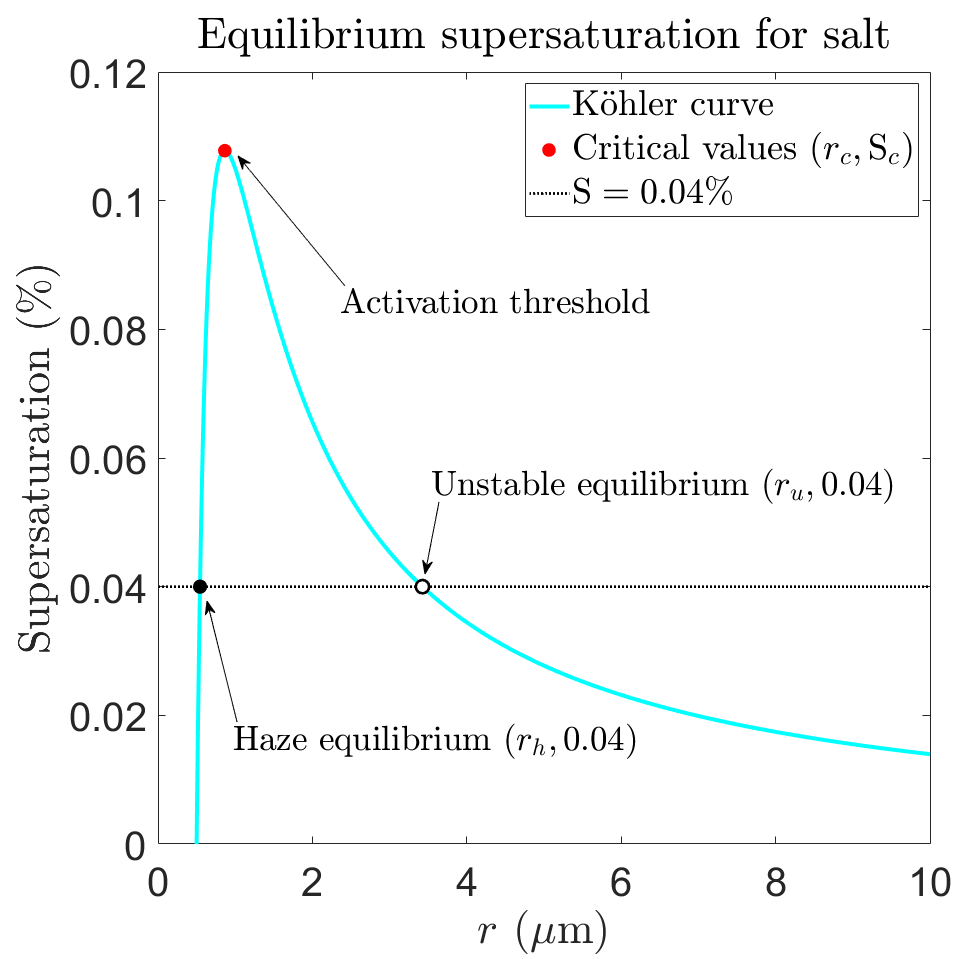}
	\caption{\label{fig:illus_kohler}The cyan curve is the K\"ohler curve for the equilibrium supersaturation for a small salt aerosol. Here, we consider a salt particle of dry radius $0.065\mu\mathrm{m}$. The dashed line indicates an example fixed supersaturation equation to $0.04\%$. The surface tension parameter is $A=1.4\cdot 10^{-3}\mu \mathrm{m}$ and the chemistry coefficient, $B = 3.5\cdot 10^{-4}\mu \mathrm{m}^3$.}
\end{figure}

 K\"ohler's critical values are often used as an activation threshold (red dot in Fig.~\ref{fig:illus_kohler}). 
However, they assume a slowly depleting humidity reservoir, limiting their applicability to describing the collective activation of many droplets. When supersaturation (S) becomes time-dependent, its sources and sinks become crucial. By taking the time derivative of S, we arrive at its evolution equation \citep{squires_1952,Korolev2003}:
\begin{equation}\label{eq:s_mixing_ratio}
	\frac{1}{1+\SSS}\frac{\dd \SSS}{\dd t} = \tau^{-1} - \beta\frac{\dd q}{\dd t},
\end{equation}
Here, $\beta$ (defined in Appendix \ref{app:notation}) accounts for the latent heat released during condensation. The variable $q$ represents the liquid-water mixing ratio, and $\tau^{-1}$ is the supersaturation timescale. This timescale is typically determined by the product of vertical velocity ($w$) and the adiabatic parameter ($a_0$), as detailed in Appendix \ref{app:notation}; see also \citep{prupp}. In this specific case,  $\tau^{-1} = a_0w$ reflects the increase in supersaturation due to adiabatic cooling during ascent.

In the case of monodisperse droplets, $r$ represents the average radius of the droplets.
In Equation \eqref{eq:s_mixing_ratio}, neglecting higher-order terms in parameters $A$ and $B$, the condensation rate ($\dd q/ \dd t $) becomes proportional to the product of the mean radius ($r$) and supersaturation (S), expressed as:
\begin{equation}\label{eq:mixing_approximation}
	\beta \frac{\dd q}{\dd t} \simeq \alpha (N) r\SSS.
\end{equation}
This proportionality is thus captured by the coefficient $\alpha(N)$, defined as:
\begin{equation}\label{eq:alpha2}
	\alpha(N)  = 4\pi \rho_w \beta D N r.
\end{equation}
Here, $\rho_w$ is the water density, $N$ is the number concentration of droplets (particles per unit volume), $\beta$ is a constant related to latent heat release (see Appendix \ref{app:notation}), and $D$ is a diffusion coefficient. The dependence of $\alpha (N)$ on particle concentration ($N$) is often omitted when $N$ is not a key factor.

The term $rS$ in Equation \eqref{eq:mixing_approximation} resembles a predator-prey term. Larger particles and higher supersaturation values ($rS$) correspond to a higher rate of humidity consumption. Conversely, smaller particles and lower supersaturation lead to less humidity consumption.

Furthermore, $\alpha (N)$  is related to the phase-relaxation timescale ($\tau_p$) through the relation $\tau_p = (r\alpha(N))^{-1}$. This timescale represents the time required for supersaturation to return to equilibrium  \citep{Korolev2003,prabhakaran2020}.

The supersaturation-radius-K\"ohler (SRK) equations,
 with Eqns~\eqref{eq:s_mixing_ratio}, \eqref{eq:mixing_approximation}, and \eqref{eq:kohler} corresponding to each term, describe CCN activation within a cloud parcel that has external sources of supersaturation. 
In warm clouds, supersaturation typically remains below a few percent \citep{khain_pinsky_2018,altaratz2021environmental}.
Therefore,
 it is common practice to simplify Eq.~\eqref{eq:s_mixing_ratio} using the approximation  $1+\SSS \approx 1$ (see, e.g., \cite{khain_pinsky_2018}). However,
 this simplification neglects the nonlinear behavior of supersaturation, which can become significant at higher values and is crucial for understanding the distribution of supersaturation in turbulent clouds \citep{santosgutierrez_2024}. For the case of monodisperse cloud particles embedded in an adiabatic air volume,
 the SRK equations governing their growth rate can be expressed as:
\begin{subequations}\label{eq:cond_growth} 
	\begin{empheq}[left = \empheqlbrace]{align}
		\dot{\SSS} &= \tau^{-1} - \alpha(N) r \SSS \label{eq:srka} \tag{SRKa} \\
		\dot{r^2} &= 2D\left( \SSS - f(r) \right), \label{eq:srkb} \tag{SRKb}
	\end{empheq}
\end{subequations}
where $\SSS$ denotes supersaturation, and $f$, the K\"ohler curve.

\subsection{Stability analysis for activation}

In line with K\"ohler's equation~\eqref{eq:kohler}, the activation/deactivation of a cloud droplet corresponds to the instability/stability of the variable $r^2$ of the SRK equation. We thus perform the linear stability analysis. Setting $\dot{\SSS} = \dot{r^2}=0$, we find a unique equilibrium $(\SSS_0,r^2_0)$, satisfying $r^2_0>0$:
\begin{subequations}\label{eq:equilibrium}
	\begin{align}
		\SSS_0 &= \frac{1}{\tau \alpha}\sqrt{\frac{A-\frac{1}{\tau \alpha}}{B}},\\
		r^2_0 &= \frac{B}{A-\frac{1}{\tau \alpha}},
	\end{align}
\end{subequations}
where $(\tau\alpha)^{-1}$ has the units of length. The immediate observation is that the equilibrium $(\SSS_0,r^2_0)$ must lie on the K\"ohler curve in the $\SSS$-$r^2$-space,  as shown in \eqref{eq:srkb}.  Moreover, the value of $\tau^{-1}$, as the source term for $\SSS$, determines the following: 
\begin{enumerate}
	\item[(i)] If  $\tau^{-1} = 0$, then $r^2_0 = r^2_c /3 = A/B$.
	\item[(ii)] If $\tau^{-1}<A\alpha$, $(\SSS_0,r^2_0)$ is an equilibrium.
	\item[(iii)] If $\tau^{-1}=2A\alpha/3$, $r^2_0 = r^2_c$.
	\item[(iv)] As $\tau^{-1}$ approaches $A\alpha$, we have $\lim_{\tau^{-1} \rightarrow A\alpha^+}r^2_0 = \infty$.
	\item[(v)] If $\tau^{-1} > A\alpha$, there are no equilibria.
	\item[(vi)] If supersaturation is decreased, $\tau^{-1}<0$, $\SSS_0<0$ and $r^2_0>0$.
\end{enumerate}
The presence of equilibria is determined by the interplay of the $(\tau\alpha)^{-1}$-factor and the curvature coefficient, $A$, whereas the chemistry coefficient $B$ only determines the location of the equilibrium.

\begin{figure}
	\centering
	\includegraphics[width=1\linewidth, height=.25\textwidth]{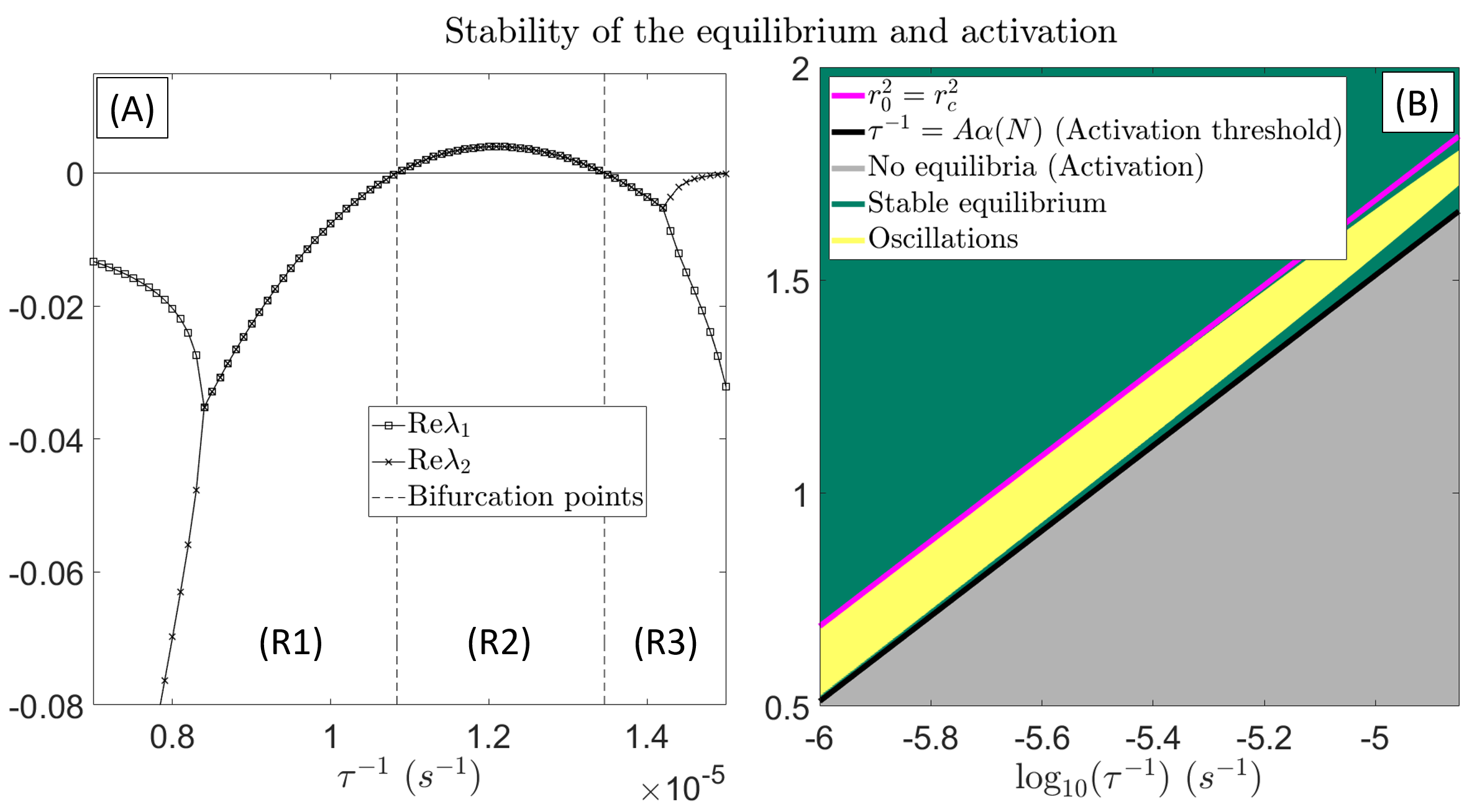}
	\caption{\label{fig:landscape_alpha1} \textbf{Panel (A):} The squared and crossed lines show the real parts of the eigenvalues $\lambda_1,\lambda_2$ of the Jacobian of the SRK equation evaluated at the equilibrium $(\SSS_0,r^2_0)$ of Eq.~\eqref{eq:equilibrium} as a function for $\tau^{-1}$, for $N=50\mathrm{cm}^{-3}$. The vertical dashed lines indicate where the equilibrium loses stability. \textbf{Panel (B):} The dark-green path corresponds to configurations in $\tau^{-1}$ and $N$ where the SRK equation is linearly stable. At the gray area, the activation threshold is surpassed and the equilibrium Eq.~\eqref{eq:equilibrium} is not well defined. At the yellow area, the SRK system is linearly unstable and supports oscillations. The magenta and black lines satisfy $r_0^2 = r_c^2$ and $\tau^{-1} = A\alpha(N)$, respectively.}
\end{figure}

The Jacobian matrix of the SRK equation determines the linear stability of the equilibrium $(r^2_0,\SSS_0)$. A careful analysis of such matrix reveals the following results--- the calculations are detailed in appendices~\ref{ap:proof}~and~\ref{app:proof_2}---:
\begin{enumerate}
    \item[(i)] If $\tau^{-1}<2A\alpha/3$, $(r^2_0,\SSS_0)$ is stable and therefore, CCN do not activate.
    \item[(ii)] If $\alpha < \frac{4A^3}{243B^2}D$, $(r^2_0,\SSS_0)$ is linearly unstable for some $2A\alpha/3<\tau^{-1}<A\alpha$.
    \item[(iii)] If $\alpha > \frac{4A^3}{243B^2}D$ and $2A\alpha/3<\tau^{-1}<A\alpha$, $(r^2_0,\SSS_0)$ is stable, although $r^2_0>r_c^2$.
\end{enumerate}
The proof of item (i) is found in Appendix~\ref{ap:proof}, and that of items (ii) and (iii) in Appendix~\ref{app:proof_2}.

It follows from the activation threshold $\alpha(N)A$, that the larger particle concentration $N$ is, the more difficult it is to activate cloud droplets. Contrarily, if $\alpha(N)\approx 0$, a minimal updraft will make $\tau^{-1}$ positive and induce activation. In case of $f(r)$ not being present in Eq.~\eqref{eq:srkb}, the system does not admit equilibria. As can be observed by direct computation; see \cite{Devenish2016} for a more detailed analytical work.

\section{Case study}\label{sec:case_study}

We evaluate our analytical framework for an air parcel at a temperature and pressure of $T=283\mathrm{K}$ and $P = 10^5\mathrm{Pa}$. We assume that the cloud is seeded with salt CCN with dry radii of $0.065\mu\mathrm{m}$. This gives the curvature and chemistry coefficients of $A=1.4\cdot 10^{-3}\mu\mathrm{m}$ and $B = 3.5\cdot 10^{-4}\mu\mathrm{m}^3$. Thus, the critical K\"ohler parameters are: 
\begin{subequations}
	\begin{align}
		r^2_c & = 0.75 \mu \mathrm{m}^2,\\
		\SSS_c &= 0.108 \%.
	\end{align}
\end{subequations}
The rest of model parameters are found in Table~\ref{tab:1}. 
\begin{center}
	\begin{table}
		\centering
		\caption{\label{tab:1}Parameter configuration for the SRK equation. $T = 283\mathrm{K}$ and $P = 10^5 \mathrm{Pa}$.} 
	\renewcommand{\arraystretch}{1.8}
	\setlength{\tabcolsep}{4pt}
	\begin{tabular}{ccccc}
		\hline \hline
		$A$ $[\mu \mathrm{m}]$ & $B$ $[\mu \mathrm{m}^3]$ & $\beta$ $[ \mathrm{m}^{3}\mathrm{kg}^{-1}]$ & $D$ $[\mu \mathrm{m}^2\mathrm{s}^{-1}]$
		\\ \hline 
		$1.4\cdot 10^{-3}$  & $3.5\cdot 10^{-4}$ & $3.5\cdot 10^{2}$ & $50$ \\  \hline \hline
	\end{tabular}
\end{table}
\end{center}

\begin{figure*}[htbp]
\centering
\includegraphics[scale=0.31]{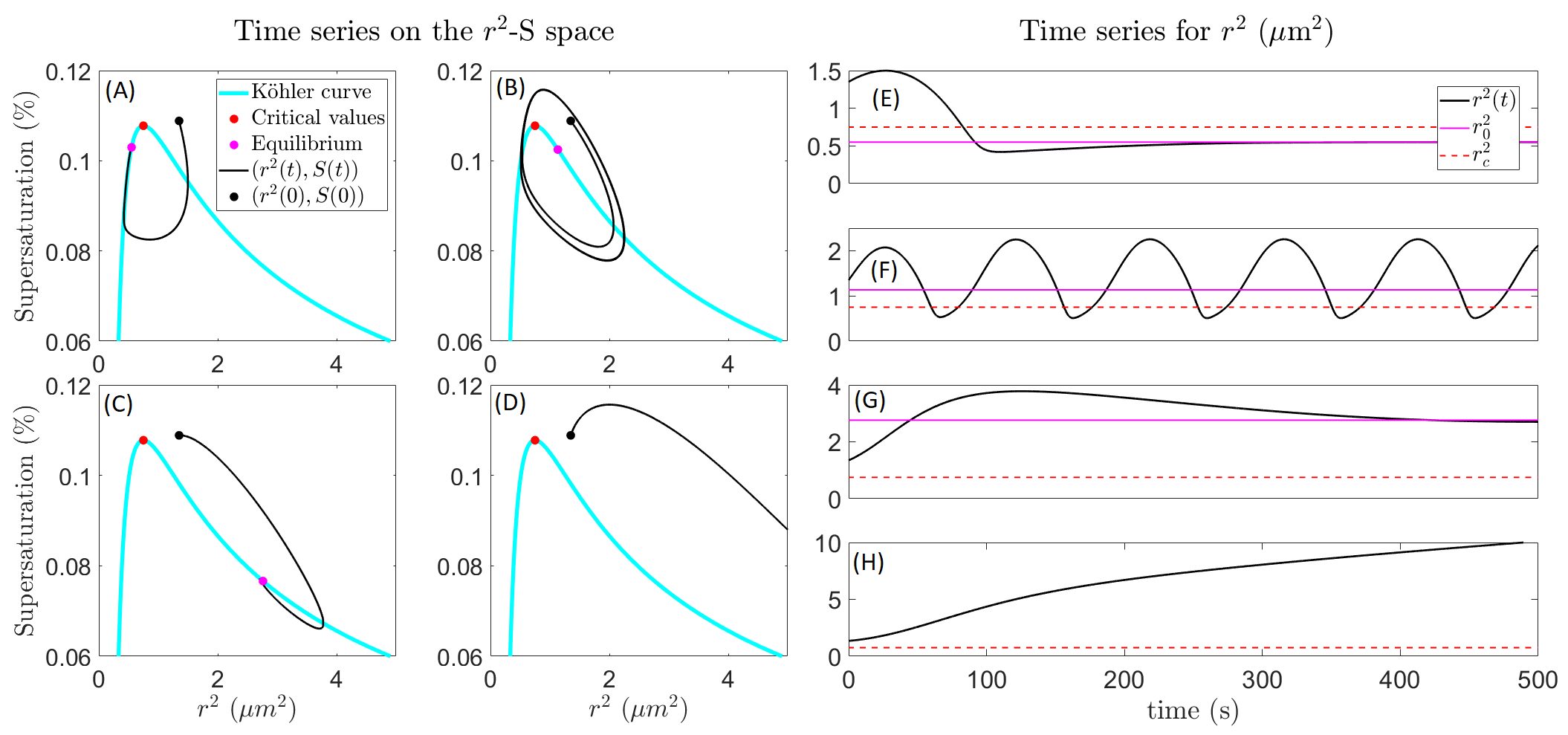}
\caption{\label{fig:ts1}\textbf{Panel (A):} solution of the SRK equation in the $r^2$-$\SSS$ space for the parameters in Table~\ref{tab:1}, $N=50\mathrm{cm}^{-3}$, $\tau^{-1} = 8.4\cdot 10^{-4}\%\mathrm{s}^{-1}$. This gives Jacobian eigenvalues $\lambda = -0.0352 \pm 0.0056 i$. \textbf{Panel (B):} same as panel (A), for $\tau^{-1} = 1.2\cdot 10^{-3}\%\mathrm{s}^{-1}$ and $\lambda = 0.004 \pm 0.168 i$. \textbf{Panel (C):} same as panel (A), for $\tau^{-1} = 1.4\cdot 10^-3\%\mathrm{s}^{-1}$ and $\lambda = 0.004 \pm 0.168 i$. \textbf{Panel (D):} same as panel (A), for $\tau^{-1} = 1.8\cdot 10^{-3}\%\mathrm{s}^{-1}$, where the system has no equilibria. \textbf{Panels (E), (F), (G) and (H):} $r^2(t)$ against time, with the parameter configuration of Panels (A), (B), (C) and (D), respectively. Note that in panels (D) and (H) there are no magenta dot and line, respectively, since there are no equilibria in this case--- see Eq.~\eqref{eq:equilibrium}---.}
\end{figure*}

We now specify droplet number concentration, $N=50\mathrm{cm}^{-3}$ which gives $\alpha = 0.011 (\mu\mathrm{ms})^{-1}$, providing the activation threshold for $\tau^{-1}=A\alpha = 1.54\cdot 10^{-5} \mathrm{s}^{-1}$. The stability of the SRK system is determined by the largest real part of the eigenvalues $\lambda_1,\lambda_2$ of the linearized equation around $(r^2_0,\SSS_0)$ vs. $\tau^{-1}$, here shown in Fig.~\ref{fig:landscape_alpha1} (A). The lines with squares and crosses correspond to the real parts of the eigenvalues, which merge when they become complex, i.e. $\lambda_1 = \overline{\lambda_2}$. The vertical dashed lines indicate the values of $\tau^{-1}$ where $(r^2_0,\SSS_0)$ loses stability and bifurcates. Thus, three distinct regimes are found  and labeled as (R1), (R2) and (R3). At regime (R1), the system is stable and activation cannot occur. At (R2), the equilibrium $(\SSS_0,r^2_0)$ loses stability giving two alternatives: either the system activates leading to $r^2 (t)\rightarrow \infty$ as $t\rightarrow \infty$, or the solutions of the SRK equation converge to a limit cycle \citep{guckenheimer_book}. Finally, at regime (R3), the SRK equation is stable, but converges to an equilibrium radius $r_0$ larger than K\"ohler's critical value $r_c$. To demonstrate these regimes, the SRK equations where numerically integrated and shown in the $r^2$-$\SSS$ space in Fig.~\eqref{fig:ts1}. Panel (A) shows the stable deactivated regime (R1), with $\tau^{-1}<2A\alpha/3$. Panel (B) reveals that the linearly unstable regime (R2) with $\tau^{-1}<A\alpha$ is in fact a limit cycle, where CCN activate and deactivate indefinitely. In panel (C), the system equilibrates at $r^2_0>r^2_c$, i.e. regime (R3). And in panel (D), $\tau^{-1}>A\alpha$ so that CCN activate and, $r^2(t)\rightarrow \infty$. Panels (E), (F), (G) and (H) are the time series of the squared-radius $r^2(t)$ vs. time.

Changes in the particle concentration $N$ affect the stability of $(r^2_0,\SSS_0)$. In Fig.~\ref{fig:landscape_alpha1} (B) we plot the largest real part of the eigenvalues $\lambda_1,\lambda_2$ of $J$ against $\tau^{-1}$ and $N$. Values in the gray triangle, below the black diagonal line lead to an activated family of CCN. Such line is computed from the activation threshold $\tau^{-1} = A\alpha$. The green area indicates configurations that lead to stable and inactive CCN populations. The magenta diagonal line satisfies $\tau^{-1} = 2A\alpha/3$, the values where $r^2_0=r^2_c$. The yellow diagonal strip corresponds to linearly unstable $(r^2_0,\SSS_0)$, which suggest the presence of oscillations as shown in panel (B) of Fig.~\ref{fig:ts1}. 

Regarding the oscillatory behavior shown in Fig.~\ref{fig:ts1}(B), we demonstrate its dependence on $\alpha(N)$ in Fig.~\ref{fig:limit_cycles}. First, we plot the equilibrium $(r^2_0,\SSS_0)$ against $\alpha(N)$, around which the limit cycle oscillates. Panels (B), (C) and (D) show the dependence of the limit cycle amplitude on $N$. The supersaturation rate $\tau^{-1}$ is taken as $5\%$ above the bifurcation point where the SRK system at the equilibrium loses stability. As $N$ grows, the cycle amplitude shrinks and, in fact, for $\alpha(N) >\frac{4A^3}{243B^2}D$, oscillations can no longer exist, as proved in Appendix~\ref{app:proof_2}. Also, the length of the interval of $\tau^{-1}$ sustaining oscillations depends sensitively on the CCN dry radius as shown in Appendix~\ref{eq:b_param}.

\begin{figure}[ht]
\centering
\includegraphics[width=1\linewidth, height=.5\textwidth]{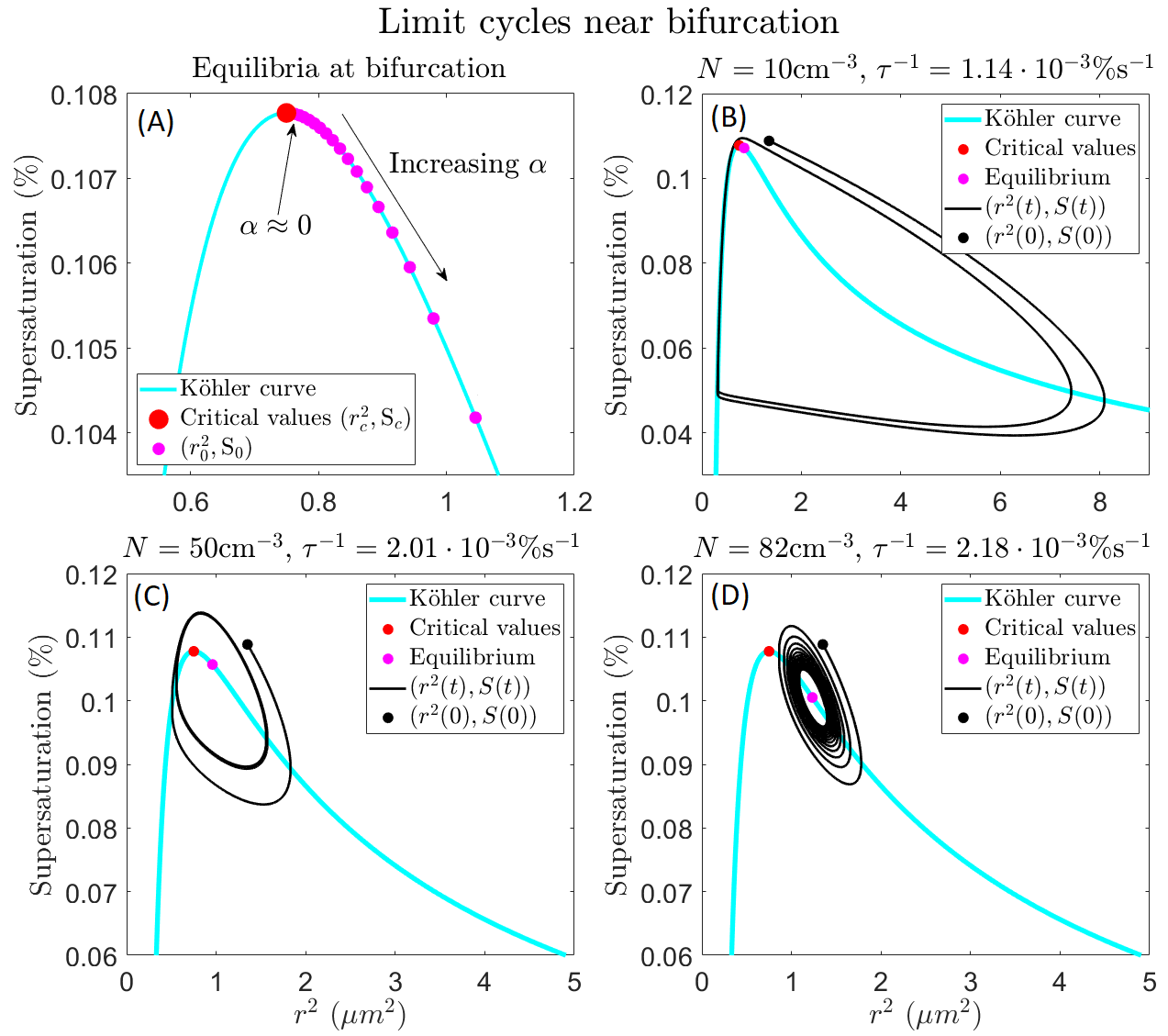}
\caption{\label{fig:limit_cycles}Equilibria and solutions of the SRK equation for the parameters of Table~\ref{tab:1} and for $N$ and $\tau^{-1}$ specified in the titles. \textbf{Panel (A):} Location of the equilibrium $(r^2_0,\SSS_0)$ as a function of $\alpha(N)$. \textbf{Panels (B), (C) and (D):} The black curve shows the solution of the SRK equation starting at the initial conditions shown in the black dot, the cyan line shows the K\"ohler curve, the red dot is the K\"ohler critical values $(r^2_c,\SSS_c)$, and the magenta dot shows the equilibrium given by Eq.~\eqref{eq:equilibrium}. The value of $\tau^{-1}$ is chosen $5\%$ above the point where the equilibrium loses stability, i.e. the bifurcation point.}
\end{figure}

\section{Concluding remarks}\label{sec:discussion}

In this study, we approached the problem of droplet activation from a multi-particle perspective. We coupled the supersaturation budget to K\"ohler's equation and explored the system's behavior as a function of updrafts as the supersaturation source $\tau^{-1}$ and the droplet concentration $N$. Four main conclusions can be drawn from this theoretical analysis: 
\begin{enumerate}
\item[(i)] Only when the supersaturation timescale satisfies $\tau^{-1}>A\alpha$ do CCN activate. For instance, in the conditions of Section~\ref{sec:case_study}, a polluted cloud with concentration around $1000\mathrm{cm}^{3}$ and undergoing updrafts of less than $0.5 \mathrm{m}\mathrm{s}^{-1}$ will not activate.
\item[(i)] Weak updrafts are not enough to attain activation: only when the supersaturation timescale satisfies $\tau^{-1}>A\alpha$ do CCN activate. 
\item[(ii)] When $\tau^{-1}<2A\alpha/3$, the SRK equations are linearly stable, and CCN do not activate, even if the initial sizes of humidified aerosols exceed K\"ohler's critical radius $r_c$.
 \item[(iii)] Droplet radii can stabilize around an equilibrium value $r_0$ greater than K\"ohler's critical radius $r_c$.  When $\tau^{-1} = 2A\alpha/3$, $r_0=r_c$. 
\item[(iv)] In a weakly lifted and sparsely populated air parcel, the SRK equation allows the existence of limit cycles, where CCN activates and deactivates indefinitely. As $\tau^{-1}$ is increased/decreased, the limit cycle appears/disappears, suggesting that the system undergoes a Hopf bifurcation \cite[Chapter~3]{guckenheimer_book}.
\end{enumerate}

Typically, haze growth is not accounted for in cloud models due to the high temporal resolution needed \citep[\S 5.3]{khain_pinsky_2018}, although there are models that determine wet aerosol growth during activation \citep{pinsky_2008,magaritz-ronen_2016a}. Generally, supersaturation is calculated at each grid point and timestep, and so if the critical radius $r_c$, according to which haze particles with a radius larger than $r_c$ are deemed activated. The present analyses, however, show that seemingly active droplets--- with a radius larger than $r_c$--- can remain in equilibrium in weak updraft conditions that imply small supersaturation source, driven by the balance between condensation and supersaturation input, i.e., creating an intermediate regime in between haze and fully activated droplets. Such thermodynamic conditions are especially relevant to small, warm clouds that exhibit weak updrafts.  It is then anticipated that turbulent fluctuations are critical under these conditions as they could trigger transitions between haze and activated droplets \citep{prabhakaran2020}.      

The extension of the analytical investigation of K\"ohler's equation coupled to supersaturation should be done in two main directions. One concerns the effects of turbulent supersaturation fluctuations and their role in modifying activation thresholds, as is observed in general hysteretic processes \citep{berglund_2002}. It is, then, expected that the addition of noise--- coming from turbulent updrafts, for instance--- will modify firstly, K\"ohler's critical radius and supersaturation and, secondly, the stability of the SRK equation. A second direction should address lifting the monodisperse assumption and analyzing its effects on activation following the work of \cite{pinsky2014}, where the supersaturation budget and size distribution evolution are investigated. The addition of CCN with different dry radii or chemical compositions will translate into a substantial increase in the problem's dimensionality, making it more difficult to extract analytical information. Suitable averaging, mean-field, or dimension-reduction techniques could help in this task.

\codeavailability{The used code is available upon request.} 

\dataavailability{No data was used or produced in this study.} 




\appendix


\section{Thermodynamic parameters and notations}\label{app:notation}
There are key thermodynamic parameters employed throughout the paper. All of these parameters are also found in the literature \citep{rogers1989,prupp,khain_pinsky_2018}. We provide the explicit expressions here:
\begin{equation}
	\beta = \frac{1}{q_v} + \frac{L_w^2}{c_pR_vT^2}
\end{equation}
where $q_v [-]$ is the water vapor mixing ration, $L_w[\mathrm{J}\mathrm{k}^{-1}]$ is the latent heat of evaporation, $c_p[\mathrm{J}\mathrm{kg}^{-1}\mathrm{K}^{-1}]$ is the specific heat capacity of moist air at constant pressure, $R_v[\mathrm{J}\mathrm{kg}^{-1}\mathrm{K}^{-1}]$ is the specific gas constant for water vapor and $T [\mathrm{K}]$ is the temperature. 

The adiabatic parameter $a_0$ is defined as:
\begin{equation}
    a_0 = \frac{g}{T}\left( \frac{L_w}{c_pR_vT} - \frac{1}{R_a} \right),
\end{equation}
where $g [\mathrm{m}\mathrm{s}^{-2}]$ is the acceleration due to gravity and $R_a [\mathrm{J}\mathrm{kg}^{-1}\mathrm{K}^{-1}]$ is the specific gas constant of moist air.

The diffusional parameter is:
\begin{equation}
	D = \left( \frac{\rho_wL_w^2}{kR_vT^2} + \frac{\rho_wR_vT}{E_w(T)D_{eff}} \right)^{-1},
\end{equation}
where $\rho_w [\mathrm{kg}\mathrm{m}^{-3}]$ is the density of liquid water, $k[\mathrm{Jm}^{-1}\mathrm{s}^{-1}\mathrm{K}^{-1}]$ is the air heat conductivity, $E_w(T)[\mathrm{N}\mathrm{m}^{-2}]$ is the saturation vapor pressure over liquid water and $D_{eff}[\mathrm{m}^2\mathrm{s}^{-1}]$ is the water vapor diffusion in air coefficient.

Regarding the K\"ohler parameters, these are:
\begin{equation}
	A = \frac{2\sigma_w}{\rho_wR_vT},
\end{equation}
where $\sigma_w[\mathrm{N}\mathrm{m}^{-1}]$ is the surface tension equal to the work needed to increase the surface by a unit of square. Raoult's coefficient is:
\begin{equation}\label{eq:b_param}
	B = r_d^3\frac{\nu_N\Phi_s\delta_sM_w\rho_N}{M_N\rho_w},
\end{equation}
where $r_d[\mathrm{m}]$ is the dry radius of the aerosol, $\nu_N[-]$ is the total number of ions produced by salt, $\Phi_s[-]$ is the molecular osmotic coefficient of a deviation from perfect solutions, $\delta_s[-]$ is the soluble fraction of the aerosol, $M_w[\mathrm{Da}]$ is the molecular mass of water, $M_N[\mathrm{Da}]$ is the molecular mass of salt and $\rho_N[\mathrm{kg}\mathrm{m}^{-3}]$ is the density of salty aerosols.

\section{Stability and deactivation interval}\label{ap:proof}

The linear stability is determined by the eigenvalues of the Jacobian of the SRK equation evaluated at the equilibria $(\SSS_0,r^2_0)$ of Eq.~\eqref{eq:equilibrium}, which reads as:
\begin{equation}\label{eq:jacobian}
 \begin{aligned}
J(\SSS_0,r^2_0)&= \begin{bmatrix}	
		-\alpha r_0 & - \frac{\alpha}{2}\SSS_0 r_0^{-1} \\
		 			2D & 3DBr_0^{-5}-DAr_0^{-3}	
\end{bmatrix}\\
&=\begin{bmatrix}	
		-\alpha r_0 & - \frac{1}{2}  \tau^{-1} r_0^{-2}  \\
		 			2D & Dr_0^{-3}(\frac{3}{\tau \alpha} - 2A)
\end{bmatrix}
 \end{aligned}
\end{equation}
In the limit of $\tau^{-1} \longrightarrow 0$, we have that $J(\SSS_0,\RRR_0)$ is lower-triangular with negative diagonal entries yielding the stability of the equilibrium. For $\tau^{-1}>0$,  it is challenging to find an analytical expression for the eigenvalues as a function of $\tau^{-1}$, yet there is a non-empty interval for which stability holds. This is gathered in the following proposition:
\begin{proposition}\label{prop:1}
	Consider the SRK equation, and its Jacobian--- Eq.~\eqref{eq:jacobian}--- evaluated at the equilibrium $(\SSS_0,r^2_0)$ given in Eq.~\eqref{eq:equilibrium}. Suppose that the coefficients $\alpha,D,B,A$ are positive. Then for all $\tau^{-1} $ in the interval $(-\infty , 2A\alpha/3]$, the equilibrium $(\SSS_0,r^2_0)$ is linearly stable.
\end{proposition}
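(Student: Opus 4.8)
The plan is to reduce the claim to the Routh--Hurwitz criterion for a $2\times2$ matrix: a real matrix $J\in\mathbb{R}^{2\times2}$ has both eigenvalues in the open left half-plane if and only if $\operatorname{tr}J<0$ and $\det J>0$. So after checking that the equilibrium is well-defined on the whole interval, everything comes down to computing $\operatorname{tr}J(\SSS_0,r_0^2)$ and $\det J(\SSS_0,r_0^2)$ from Eq.~\eqref{eq:jacobian} and reading off their signs when $\tau^{-1}\le 2A\alpha/3$.

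First I would record well-posedness of the equilibrium on this range. By Eq.~\eqref{eq:equilibrium}, $r_0^2 = B/\!\left(A-(\tau\alpha)^{-1}\right)$ is positive -- so $r_0>0$ is real and $r_0^{-1},r_0^{-2},r_0^{-3}$ make sense -- precisely when $(\tau\alpha)^{-1}<A$, i.e. $\tau^{-1}<A\alpha$; this is implied by $\tau^{-1}\le 2A\alpha/3$ since $A\alpha>0$. The radicand defining $\SSS_0$ is then also nonnegative, so $\SSS_0$ is real (it is negative when $\tau^{-1}<0$, cf. item~(vi), which does not affect the stability argument).

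Next, the trace. From the second form of $J$ in Eq.~\eqref{eq:jacobian}, $\operatorname{tr}J = -\alpha r_0 + Dr_0^{-3}\!\left(3(\tau\alpha)^{-1}-2A\right)$. The hypothesis $\tau^{-1}\le 2A\alpha/3$ reads $(\tau\alpha)^{-1}\le 2A/3$, equivalently $3(\tau\alpha)^{-1}-2A\le 0$ (which also covers all $\tau^{-1}<0$), so the second summand is $\le 0$ while $-\alpha r_0<0$; hence $\operatorname{tr}J<0$. For the determinant, expanding the $2\times2$ matrix and using the identity $\alpha\cdot 3(\tau\alpha)^{-1}=3\tau^{-1}$, the off-diagonal product $-\tfrac12\tau^{-1}r_0^{-2}\cdot 2D$ combines with the product of the diagonal entries and a cancellation leaves $\det J = Dr_0^{-2}\!\left(2A\alpha-3\tau^{-1}+\tau^{-1}\right)=2Dr_0^{-2}\!\left(A\alpha-\tau^{-1}\right)$. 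Since $D>0$, $r_0^{-2}>0$, and $\tau^{-1}\le 2A\alpha/3<A\alpha$, this is strictly positive. Together with $\operatorname{tr}J<0$, Routh--Hurwitz gives that both eigenvalues of $J(\SSS_0,r_0^2)$ have negative real part, i.e. the equilibrium is linearly stable; at the endpoint $\tau^{-1}=2A\alpha/3$ both inequalities still hold strictly ($\operatorname{tr}J=-\alpha r_0<0$, $\det J = 2Dr_0^{-2}A\alpha/3>0$), so the interval can be taken closed.

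I do not expect a serious obstacle; the delicate part is purely bookkeeping -- getting the cancellation in $\det J$ right (the factor $2$, and the fact that the $3\tau^{-1}$ coming from the $(2,2)$ entry and the $\tau^{-1}$ from the cross term have opposite signs) and phrasing the trace estimate so that it handles $\tau^{-1}\le 0$ and $0<\tau^{-1}\le 2A\alpha/3$ uniformly. It is also worth a clarifying sentence that only \emph{linear} stability is asserted: for $\tau^{-1}$ close to $2A\alpha/3$ the two eigenvalues may be complex conjugates (as Figs.~\ref{fig:landscape_alpha1} and~\ref{fig:ts1} illustrate), but the negativity of their common real part is exactly what $\operatorname{tr}J<0$ and $\det J>0$ deliver.
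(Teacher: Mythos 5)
Your proposal is correct and takes essentially the same route as the paper: in both cases the argument reduces to showing $\operatorname{tr}J<0$ and $\det J>0$ and reading off the threshold $\tau^{-1}\le 2A\alpha/3$ from the trace; the only difference is that you invoke the Routh--Hurwitz criterion as a black box, while the paper rederives it from the explicit eigenvalue formulas via a triangle inequality. The computations of $\operatorname{tr}J$ and $\det J=2Dr_0^{-2}(A\alpha-\tau^{-1})$ match the paper's.
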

\begin{proof}
	To prove the stability of $(\SSS_0,r^2_0)$ we have to examine the real parts of the eigenvalues $\lambda_1$ ,and $\lambda_2$ of the Jacobian matrix  $J$ given in \eqref{eq:jacobian}. 
The trace  $\mathrm{Tr}(J)$ and determinant $|J|$, provide the well-known analytic expressions
	\begin{subequations}\label{eq:eigenvalues}
		\begin{align}
			\lambda_1 &= \frac{\mathrm{Tr}(J) + \sqrt{\mathrm{Tr}(J)^2 - 4|J|}}{2},\\
			\lambda_2 &= \frac{\mathrm{Tr}(J) - \sqrt{\mathrm{Tr}(J)^2 - 4|J|}}{2}.
		\end{align}
	\end{subequations}
 Simple calculations show that	
\begin{equation}
 \begin{aligned}
		|J| &= -\alpha D r^{-2}_0 \left(\frac{3}{\tau \alpha} - 2A \right) + \tau^{-1} D r^{-2}_0
		\\&= 2Dr^{-2}_0 \left( A\alpha - \tau^{-1} \right). 
 \end{aligned}
\end{equation}
ensuring positivity of $|J|$ if and only if $\tau^{-1} < A\alpha$. Thus, from Eq.~\eqref{eq:eigenvalues}, the sign of the trace $\mathrm{Tr}(J)$ will determine the stability in the interval $\tau^{-1} < A\alpha$ when $|J|\geq 0$ . Indeed, if the discriminant $\mathrm{Tr}(J)^2 - 4|J|$ is negative, its square-root is a pure-imaginary value and, thus, the stability of $(\SSS_0,r^2_0)$ is determined by the sign of $\mathrm{Tr}(J)$. If the discriminant is positive, we have the following triangle inequality:
 \begin{equation}\label{eq:triangle}
     \left|\mathrm{Tr}(J)\right|\geq \sqrt{\mathrm{Tr}(J)^2 - 4|J|},
 \end{equation}
 and hence, the trace $\mathrm{Tr}(J)$ decides the stability. To see this, if $\mathrm{Tr}(J)$ is negative, $\lambda_2$ in Eq.~\eqref{eq:eigenvalues} will clearly remain negative, but also $\lambda_1$  by virtue of Eq.~\eqref{eq:triangle}. Conversely, if $\mathrm{Tr}(J)$ is positive, $\lambda_1$ will remain so and that is enough to ensure instability.

	The trace is written as:
	\begin{equation}
		\mathrm{Tr}(J) = -\alpha r_0+D\alpha^{-1}r_0^{-3}\left(3\tau^{-1} - 2A\alpha \right),
	\end{equation}
	from where we deduce that for $\tau^{-1}<2A\alpha/3$, the trace is negative and, hence, $2A\alpha/3$ is a lower bound to the smallest real zero of $\mathrm{Tr}(J)$. This together with the determinant being positive for $\tau^{-1}<A\alpha$, we conclude that the equilibrium $(\SSS_0,r_0^2)$ is stable for $\tau^{-1}<2A\alpha/3$.
\end{proof}

\subsection{Oscillatory regime}\label{app:proof_2}
For the occurrence of limit cycles, it is necessary that the Jacobian eigenvalues develop an imaginary part, so that $\lambda_1 = \bar{\lambda_2}$ \citep{guckenheimer_book}. The value of $\tau^{-1}$ at which this happens is given by the equation $\mathrm{Tr}(J(\SSS_0,r^2_0))^2 = 4|J(\SSS_0,r^2_0)|$ which is equivalent to solving the following polynomial equation:
\begin{equation}
 \begin{aligned}
\label{eq:polynomial_imaginary}
  &  -\alpha^2B^4 + D^2\alpha^{-2}(3\tau^{-1}-2A\alpha)(A-\tau^{-1}\alpha^{-1})^4 \\
   &- 2DB^2(3\tau^{-1}-2A\alpha)(A-\tau^{-1}\alpha^{-1})^2 \\
   &- 8DB^2(A\alpha-\tau^{-1})(A-\tau^{-1}\alpha^{-1})^2 = 0.
 \end{aligned}
\end{equation}
The solutions of this equation in the variable $\tau^{-1}$, would give the location of the intersection of the two branches in Fig.~\ref{fig:landscape_alpha1}. When eigenvalues have negative real parts, their imaginary component correspond to the oscillations in a damped equilibrium. Positivity of the real part yields instability and, the possible appearance of limit cycles. Notice that when $B=0$, a solution of Eq.~\eqref{eq:polynomial_imaginary} is given by $\tau^{-1}=A\alpha$.

The range of values of $\tau^{-1}<A\alpha$ where the equilibrium $(r^2_0,\SSS_0)$ is unstable is susceptible of supporting a limit cycle, as illustrated in Fig.~\ref{fig:ts1}B. For that, it is necessary the trace of the Jacobian $\mathrm{Tr}(J)$ is positive, as per Eq.~\eqref{eq:eigenvalues}. The zeros of the trace $\mathrm{Tr}(J)$ satisfy the following cubic polynomial equation in the indeterminate $\tau^{-1}$:

\begin{equation}\label{polynomial}
 \begin{aligned}
		P(\tau^{-1}) = 3D&\tau^{-3} - 8DA\alpha\tau^{-2} \\
		&+ 7DA^2\alpha^2\tau^{-1} - B^2\alpha^4 - 2DA^3\alpha^3 = 0.
 \end{aligned}
\end{equation}
The roots of this polynomial are given explicitly in Appendix~\ref{app:roots}. However, an examination of the signs of the coefficients reveals--- by Descartes' rule of signs--- that $P$ has either one or three positive roots. Conversely, the reverse-sign polynomial $P_-(\tau^{-1}) $ reads as
\begin{equation}
 \begin{aligned}
		P_-(\tau^{-1}) &= P (-\tau^{-1}) \\
		&= -3D\tau^{-3} - 8DA\alpha\tau^{-2}\\
		&\hspace{8ex} - 7DA^2\alpha^2\tau^{-1} - B^2\alpha^4 - 2DA^3\alpha^3 .
 \end{aligned}
\end{equation}
The lack of sign-change in the coefficients of $P_-(\tau^{-1})$ shows that $P(\tau^{-1})$ does not have zeros for negative values of $\tau^{-1}$. As a consequence, it is possible that $(\SSS_0,r^2_0)$ loses its stability for some values of $\tau^{-1}$ prior to the activation threshold, namely in the range $2A\alpha/3<\tau^{-1}<A\alpha$. It is challenging, however, to determine whether in that range droplets activate yielding that the solutions $r^2(t)$ diverge to infinity, or whether they converge to a limit cycle, see e.g. \cite{guckenheimer_book}. Below we provide a sufficient condition on $\alpha(N)$ for the existence of positive eigenvalues for the trace $\mathrm{Tr}(J)$ for values of $\tau^{-1} < A\alpha$.
\begin{proposition}\label{prop:2}
	In the conditions of Proposition~\ref{prop:1}, if $\alpha < \alpha_{max}=4A^3D/(243B^2)$, there exists an interval  $(\tau^{-1}_{a},\tau^{-1}_{b}) \subset (0,A\alpha)$ such that the trace of the Jacobian $\mathrm{Tr}(J)$ is positive.
\end{proposition}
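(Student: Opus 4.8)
The plan is to work directly with the trace $\mathrm{Tr}(J) = -\alpha r_0 + D\alpha^{-1} r_0^{-3}(3\tau^{-1}-2A\alpha)$ and reduce the condition $\mathrm{Tr}(J)>0$ to an elementary one-variable inequality. By Proposition~\ref{prop:1} the trace is already negative on $(-\infty, 2A\alpha/3]$, so it suffices to look for positivity in the window $2A\alpha/3 < \tau^{-1} < A\alpha$. There $3\tau^{-1}-2A\alpha>0$ and $r_0>0$, so multiplying by $\alpha r_0^3>0$ shows that $\mathrm{Tr}(J)>0$ holds if and only if $D(3\tau^{-1}-2A\alpha) > \alpha^2 r_0^4$. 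Inserting $r_0^2 = B\alpha/(A\alpha-\tau^{-1})$ from Eq.~\eqref{eq:equilibrium} and clearing the positive factor $(A\alpha-\tau^{-1})^2$ turns this into
\[
h(\tau^{-1}) := D\,(3\tau^{-1}-2A\alpha)\,(A\alpha-\tau^{-1})^2 \;>\; \alpha^4 B^2 ,
\]
and one recognizes $h(\tau^{-1}) - \alpha^4 B^2 = P(\tau^{-1})$, the cubic of Eq.~\eqref{polynomial}.

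The heart of the argument is a maximization of $h$ on $(2A\alpha/3, A\alpha)$. Differentiating, $h'(\tau^{-1}) = D(A\alpha-\tau^{-1})(7A\alpha - 9\tau^{-1})$, so $h$ is strictly increasing on $(2A\alpha/3, 7A\alpha/9)$, strictly decreasing on $(7A\alpha/9, A\alpha)$, vanishes at both endpoints $2A\alpha/3$ and $A\alpha$, and attains its maximum at $\tau^{-1} = 7A\alpha/9$ with value $h(7A\alpha/9) = 4DA^3\alpha^3/243$. Hence $h$ exceeds $\alpha^4 B^2$ at some point of $(2A\alpha/3, A\alpha)$ if and only if $4DA^3\alpha^3/243 > \alpha^4 B^2$, i.e. if and only if $\alpha < 4A^3 D/(243B^2) = \alpha_{max}$ --- exactly the hypothesis.

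To finish I would apply the intermediate value theorem on each side of $7A\alpha/9$, using the strict monotonicity just established: there is a unique $\tau^{-1}_a \in (2A\alpha/3, 7A\alpha/9)$ and a unique $\tau^{-1}_b \in (7A\alpha/9, A\alpha)$ with $h(\tau^{-1}_a) = h(\tau^{-1}_b) = \alpha^4 B^2$, and $h > \alpha^4 B^2$ precisely on $(\tau^{-1}_a, \tau^{-1}_b)$. By the equivalence of the first paragraph this is exactly the set on which $\mathrm{Tr}(J) > 0$, and it is contained in $(0, A\alpha)$, which proves the claim.

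I do not anticipate a genuine obstacle; the two points needing care are (i) restricting to $\tau^{-1} > 2A\alpha/3$ \emph{before} clearing signs, so that the passage to $h > \alpha^4 B^2$ is an equivalence and not merely an implication, and (ii) verifying from the factored form of $h'$ that $h$ has a single interior critical point, which is what makes $\{h > \alpha^4 B^2\}$ a single interval rather than a union of intervals. One could instead argue with the cubic $P$ directly --- $P = \alpha r_0^3 (A\alpha-\tau^{-1})^2\,\mathrm{Tr}(J)$ has the sign of $\mathrm{Tr}(J)$ on $(0,A\alpha)$, and the data $P(0)<0$, $P(A\alpha) = -\alpha^4 B^2 < 0$, $P(7A\alpha/9) > 0$ already force a positivity interval between two roots --- but the monotonicity route is cleaner and avoids counting roots.
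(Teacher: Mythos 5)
Your argument is correct and follows essentially the same route as the paper's proof: reduce $\mathrm{Tr}(J)>0$ to the positivity of the cubic $P$ on $(0,A\alpha)$, locate its interior critical point at $\tau^{-1}=7A\alpha/9$, and read off the condition $\alpha<\alpha_{\max}$ from the value $P(7A\alpha/9)=\tfrac{4}{243}DA^3\alpha^3-B^2\alpha^4$. The only (cosmetic) difference is that you keep the cubic in the factored form $h(\tau^{-1})=D(3\tau^{-1}-2A\alpha)(A\alpha-\tau^{-1})^2$, which makes the derivative and the double root at $A\alpha$ visible at a glance and lets you conclude via monotonicity and the intermediate value theorem that the positivity set is exactly one interval, whereas the paper expands $P$, applies Descartes' rule to bound the root count, and then invokes continuity near the maximum to produce an $\varepsilon$-neighborhood.
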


\begin{proof}
	Since $\mathrm{Tr}(J)$ is a function of $\tau^{-1}$, we shall define, for notational convenience, $T(\tau^{-1}) = \mathrm{Tr}(J)$, for every $\tau^{-1}$ in $(0,A\alpha)$.
	Additionally, we note that $T$ is a smooth function in the interval $(0,A\alpha)$ and so is the polynomial $P$.
	
	\textbf{Step 1.} As also noted in Eq.~\eqref{polynomial}, the zeros of the trace $\mathrm{Tr}(J)$ are equivalent to those of $P(\tau^{-1})$ in the variable $\tau^{-1}$. This is to say that $P(\tau^{-1}) = 0$ if an only if $T(\tau^{-1})=0$. However, this is only valid in the interval $(0,A\alpha)$, since the following limit holds:
	\begin{equation}
		\lim_{\tau^{-1}\rightarrow A\alpha^{-}} T(\tau^{-1}) = -\infty.
	\end{equation}
	Furthermore, we have that $T(\tau^{-1})>0$ if and only if $P(\tau^{-1})>0$ and $T(\tau^{-1})<0$ if and only if $P(\tau^{-1})<0$.
	
	\textbf{Step 2.} The local maximum and local minimum of $P$ are located at $\tau^{-1} = 7A\alpha/9$ and $A\alpha$, respectively. This is calculated from the derivative of $P(\tau^{-1})$:
	\begin{equation}\label{eq:der_pol}
		P'(\tau^{-1}) = 9D\tau^{-2} - 16DA\alpha\tau^{-1} + 7DA^2\alpha^2 = 0.
	\end{equation}
	From where we obtain the roots $\tau^{-1} = 7A\alpha/9$ and $A\alpha$. Since we are only interested in the interval $(0,A\alpha)$, we shall focus on the value $\tau^{-1} = 7A\alpha/9$. Because cubic term coefficient $3D$ is positive, it follows that $\tau^{-1} = 7A\alpha/9$ is a local maximum.
	
	\textbf{Step 3.} We turn to examine the sign of $P(7A\alpha/9)$. A direct evaluation provides the following expression:
	\begin{equation}
		P\left(\frac{7A\alpha}{9}\right) = \frac{4}{243}DA^3\alpha^3 - B^2\alpha^4.
	\end{equation}
	Equalizing to zero, gives that if $\alpha < \frac{4A^3}{243B^2}D$, $P(7A\alpha/9)>0$. Hence, in such case, there exists $\varepsilon > 0$ such that $P(\tau^{-1})>0$ for every $\tau^{-1}$ in the interval $(7A\alpha/9 - \varepsilon,7A\alpha/9 + \varepsilon)$. 
	
	It is now enough to define $\tau^{-1}_{a}=7A\alpha/9 - \varepsilon$ and $\tau^{-1}_{b}=7A\alpha/9 + \varepsilon$ to obtain the desired result.
	
\end{proof}

\section{Dependence of oscillatory interval on $B$}

While Raoult's factor $B$, defined on Eq.~\eqref{eq:b_param} of Appendix~\ref{app:notation}, does not play a role in determining the activation threshold, it is key in the presence of positive eigenvalues of the linearized SRK equation, as per Proposition~\ref{prop:2}. It was there found a maximum value $\alpha_{max}$ such that for every $\alpha(N)$ lower than $\alpha_{max}$, the equilibrium $(\SSS_0,r^2_0)$ in Eq.~\eqref{eq:equilibrium} loses stability for supersaturation rates $\tau^{-1}$ lower than the activation threshold $A\alpha(N)$. Such maximum value $\alpha_{max}$ is:
\begin{equation}
	\alpha_{max} = \frac{4A^3}{243B^2}D.
\end{equation}
In Fig.~\ref{fig:b_dependence} we plot the values of $\alpha_{max}$ as a function of aerosol dry radius, for salt (blue line) and ammonium sulfate (red line). We observe the sensitive dependence of $\alpha_{max}$ on dry radius. As a consequence of this, we can conclude that the interval of values of $\tau^{-1}$ that can lead to oscillatory behavior--- analogous to that illustrated in Fig.~\ref{fig:ts1}(B)--- increases with smaller dry aerosol radius.
\begin{figure}[t]
	\centering
	\includegraphics[width=.9\linewidth, height=.33\textwidth]{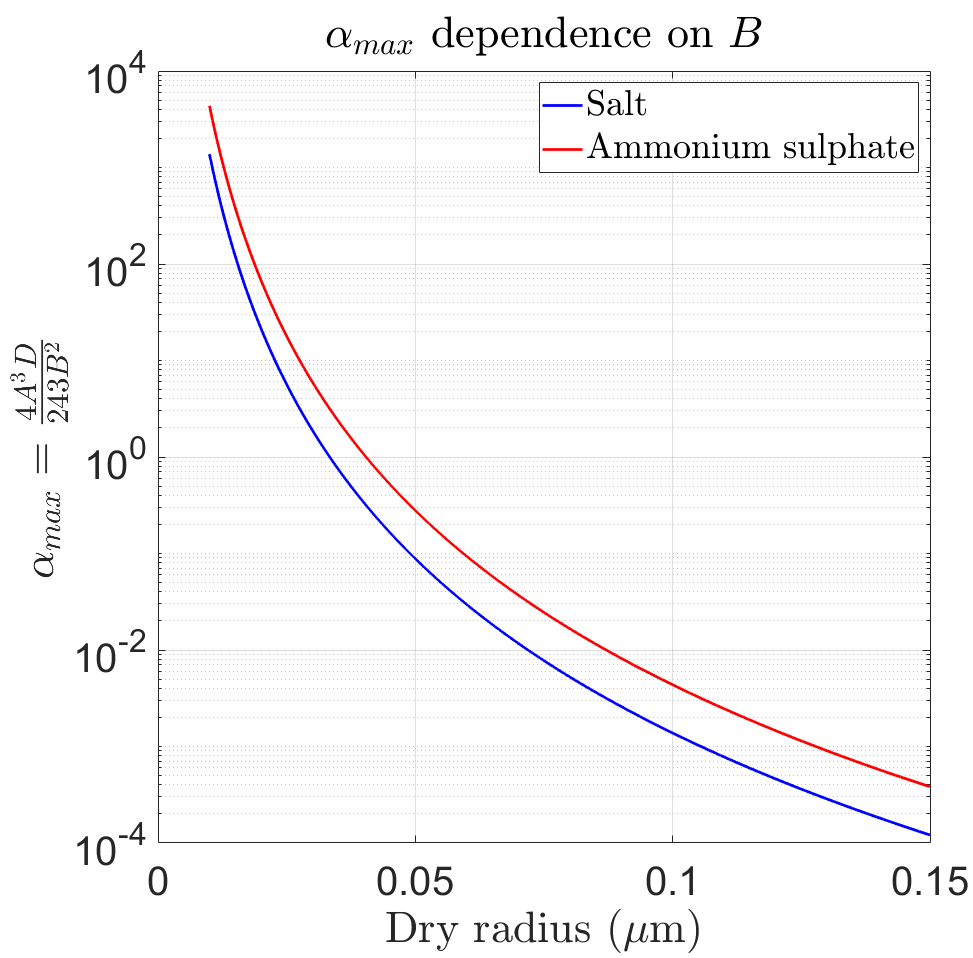}
	\caption{\label{fig:b_dependence}We plot $\alpha_{max}=\frac{4A^3}{243B^2}D$ as a function of aerosol dry radius and for two inorganic compounds indicated in the legend.}
\end{figure}

\section{Roots of the polynomial $P$ in Eq.~\eqref{polynomial}}\label{app:roots}

The roots $\{x_1,x_2,x_3 \}$ of the cubic polynomial $P(\cdot)$ with indeterminate $\tau^{-1}$ introduced in Eq.~\eqref{polynomial} are given by:

\begin{equation}\label{eq:smallest_root}
	x_1 = \sqrt[3]{\xi_1 + \sqrt{\xi_1^2 + \xi_2^3}} +\sqrt[3]{\xi_1 - \sqrt{\xi_1^2 + \xi_2^3}}+\frac{8A\alpha}{9},
\end{equation}

\begin{equation}
 \begin{aligned}
	x_2 &= \left(-\frac{1}{2}+i\frac{\sqrt{3}}{2}\right)\sqrt[3]{\xi_1 + \sqrt{\xi_1^2 + \xi_2^3}} \\
	&\hspace{5ex}+\left(-\frac{1}{2}-i\frac{\sqrt{3}}{2}\right)\sqrt[3]{\xi_1 - \sqrt{\xi_1^2 + \xi_2^3}}+\frac{8A\alpha}{9},
 \end{aligned}
\end{equation}
\begin{equation}
 \begin{aligned}
	x_3 &= \left(-\frac{1}{2}-i\frac{\sqrt{3}}{2}\right)\sqrt[3]{\xi_1 + \sqrt{\xi_1^2 + \xi_2^3}} \\
	&\hspace{5ex}+\left(-\frac{1}{2}+i\frac{\sqrt{3}}{2}\right)\sqrt[3]{\xi_1 - \sqrt{\xi_1^2 + \xi_2^3}}+\frac{8A\alpha}{9},
 \end{aligned}
\end{equation}
where $\xi_1$ and $\xi_2$ are defined as:
\begin{align}
	\xi_1 &= -\frac{1}{3^6}(A\alpha)^3 - \frac{B^2\alpha^4}{6D},\\
	\xi_2&= -\frac{1}{3^4}(A\alpha)^2.
\end{align}

If $\alpha < 4A^3D/(243B^2)$, the polynomial $P$ has three real roots, as per Proposition~\ref{prop:2}. In this case, we assume that $x_1\leq x_2 \leq x_3$. In particular, if $B=0$, then $x_1 = 2A\alpha/3$, as  observed by direct evaluation of $P(2A\alpha/3)$. Moreover, $x_2=x_3 = A\alpha$ in this case. With this remark and the location of the local maximum of $P$ at $7A\alpha/9$ obtained by solving Eq.~\eqref{eq:der_pol}, we can estimate $x_1$ as the midpoint between $2A\alpha/3$ and $7A\alpha/9$:

\begin{equation}
	x_1 \approx \frac{1}{2}\left( \frac{2}{3}+\frac{7}{9}\right)A\alpha=\frac{13}{18}A\alpha.
\end{equation}

\noappendix       

\appendixfigures  

\appendixtables   


\authorcontribution{MSG and MDC conceived the present idea. MSG lead the analyses, MDC and IK supported. MSG, MDC and IK discussed the results and wrote the manuscript. All authors critically contributed to the final form of the manuscript.} 

\competinginterests{The authors declare no competing interests.} 


\begin{acknowledgements}
This work has been partially supported by the European Research Council (ERC) under the European Union’s Horizon 2020 research and innovation program (grant agreement no. 810370). MSG is grateful to the Feinberg Graduate School for their support through the Dean of Faculty Fellowship.
\end{acknowledgements}



\bibliography{apssamp}

\end{document}